\documentclass[journal]{IEEEtran}
\usepackage{amsfonts}
\usepackage{amssymb}
\usepackage{amsthm}
\usepackage{amsmath,amsfonts,amssymb}
\usepackage[dvips]{graphicx}
\usepackage{verbatim}
\usepackage{setspace}
\usepackage{bm}
\usepackage{algorithmic} 
\usepackage[ruled,vlined]{algorithm2e}
\usepackage{cite}

\usepackage{changepage}
\usepackage{pdfpages}
\usepackage{color}
\newtheorem{theorem}{Theorem}
\newtheorem{lemma}{Lemma}

\newcommand{\figwidth}{8}
\IEEEoverridecommandlockouts

\begin{document}
%
\title{Joint Tx-Rx Beamforming and Power Allocation for 5G Millimeter-Wave Non-Orthogonal Multiple Access (MmWave-NOMA) Networks}

%
%
%
\author{Lipeng Zhu,
        Jun Zhang,
        Zhenyu Xiao,~\IEEEmembership{Senior Member,~IEEE,}
        Xianbin Cao,~\IEEEmembership{Senior Member,~IEEE,}
        Dapeng Oliver Wu,~\IEEEmembership{Fellow,~IEEE}
        and Xiang-Gen Xia,~\IEEEmembership{Fellow,~IEEE}
\thanks{L. Zhu, J. Zhang, Z. Xiao and X. Cao are with the School of
Electronic and Information Engineering, Beihang University, Beijing 100191, China. \{zhulipeng@buaa.edu.cn, buaazhangjun@vip.sina.com, xiaozy@buaa.edu.cn, xbcao@buaa.edu.cn\}.}
\thanks{D. O. Wu is with the Department of Electrical and Computer Engineering, University of Florida, Gainesville, FL 32611, USA. \{dpwu@ufl.edu\}.}
\thanks{X.-G. Xia is with the Department of Electrical and Computer Engineering, University of Delaware, Newark, DE 19716, USA. \{xianggen@udel.edu\}.}
}

%
%

\maketitle

\begin{abstract}
In this paper, we investigate the combination of non-orthogonal multiple access and millimeter-Wave communications (mmWave-NOMA). A downlink cellular system is considered, where an analog phased array is equipped at both the base station and users. A joint Tx-Rx beamforming and power allocation problem is formulated to maximize the achievable sum rate (ASR) subject to a minimum rate constraint for each user. As the problem is non-convex, we propose a sub-optimal solution with three stages. In the first stage, the optimal power allocation with a closed form is obtained for an arbitrary fixed Tx-Rx beamforming. In the second stage, the optimal Rx beamforming with a closed form is designed for an arbitrary fixed Tx beamforming. In the third stage, the original problem is reduced to a Tx beamforming problem by using the previous results, and a boundary-compressed particle swarm optimization (BC-PSO) algorithm is proposed to obtain a sub-optimal solution. Extensive performance evaluations are conducted to verify the rational of the proposed solution, and the results show that the proposed sub-optimal solution can achieve a near-upper-bound performance in terms of ASR, which is significantly improved compared with those of the state-of-the-art schemes and the conventional mmWave orthogonal multiple access (mmWave-OMA) system.
\end{abstract}

\begin{IEEEkeywords}
millimeter-wave communications, NOMA, Tx-Rx beamforming, power allocation, particle swarm optimization.
\end{IEEEkeywords}

%
\IEEEpeerreviewmaketitle

\section{Introduction}
\IEEEPARstart{W}{ith} the coming of the fifth-generation (5G) mobile communication, the urgent requirements of high spectrum efficiency, low latency, low cost and massive connectivity pose great challenges \cite{andrews2014will,Ding2017survNOMA,Dai2018NOMAsurvey,XiaoM2017survmmWave}. The conventional orthogonal multiple access (OMA) techniques, i.e., frequency division multiple access (FDMA), time division multiple access (TDMA), code division multiple access (CDMA) and orthogonal frequency division multiple access (OFDMA) may not meet the insistent requirements of mobile Internet and Internet of things (IoT) due to massive connectivities. Recently, non-orthogonal multiple access (NOMA) has become a promising candidate technology for 5G \cite{Benjebbour2013ConceptNOMA,Dai2015NOMA5G,andrews2014will,Ding2017survNOMA,Dai2018NOMAsurvey,Choi2014NOMA}. The basic idea of NOMA is to serve multiple users in an orthogonal frequency/time/code resource block and distinguish them in power domain. The signals of the users are decoded by using successive interference cancellation (SIC) at the receivers. In general, the users are sorted by their channel gains, and the decoding order is the increasing order of the channel gains. The first user, which has the lowest channel gain, directly decodes its signal treating the other signals as noise. The second user decodes the signal of the first user and removes it from the received signals. Then, it decodes its signal treating the other signals as noise. So on and so forth, the last user decodes and removes all the other users' signals before decoding its signal. In such a manner, the users with different channel conditions can transmit simultaneously. Both the spectrum efficiency and the number of users can be increased.

In addition to NOMA, millimeter-Wave (mmWave) communications is another enabling candidate technology for 5G \cite{andrews2014will,niu2015survey,rapp2013mmIEEEAccess,XiaoM2017survmmWave,xiao2017mmWaveFD,andrews2016modeling}. The abundant bandwidth of mmWave, from 30 GHz to 300 GHz, can significantly enhance the system capacity. Different from the existing microwave-band systems, the propagation in the mmWave-band presents high path loss and angle-domain spatial sparsity. Profiting from the short wavelength of the mmWave signal, large scale antenna array can be equipped in a small area. Thus, directional beamforming is usually utilized to obtain the beam gain, which can compensate the path loss in the mmWave band.

The combination of NOMA and mmWave communications (mmWave-NOMA) has some unique advantages. On one hand, the highly directional feature of mmWave transmission implies that the users' channels can be highly correlated, which is appropriate for applying NOMA. On the other hand, due to the high hardware consumption in the mmWave band, the number of radio frequency (RF) chains is limited in general. Thus, applying NOMA in mmWave communications can significantly increase the number of users \cite{Ding2017random,Cui2018mmWaveNOMA,Daill2017}.

Motivated by these advantages, we investigate mmWave-NOMA in this paper. There are several prior works on mmWave-NOMA in the existing literatures. Random beamforming approaches for a single-beam case and for a limited feedback case were proposed for mmWave-NOMA in \cite{Ding2017random}. The simulation results proved that mmWave-NOMA could achieve significant gains in terms of sum rates and outage probabilities, compared with the conventional mmWave-OMA. In \cite{Cui2018mmWaveNOMA}, the authors exploited beamforming, user scheduling and power allocation in mmWave-NOMA networks. By invoking random beamforming, the global optimal solution of power allocation can be obtained by the proposed branch and bound approach. Then, a low-complexity suboptimal approach was developed for striking a good computational complexity-optimality tradeoff. In \cite{Daill2017}, a new transmission scheme of beamspace multiple-input multiple-output NOMA (MIMO-NOMA) was proposed, where the number of the users can be larger than the number of the RF chains. A hybrid precoding approach based on the equivalent channel vector was proposed first. Then, an iterative algorithm was developed to optimize the power allocation for the users. In \cite{Wu2017hybridBF}, a NOMA based hybrid beamforming design in mmWave systems was considered. A low complexity user pairing algorithm was proposed first to reduce the interferences. Then, the authors proposed a hybrid beamforming algorithm with low feedback. Finally, the power allocation scheme was proposed to maximize the sum capacity. An in-depth capacity analysis for the integrated NOMA-mmWave-massive-MIMO systems was provided in \cite{Zhang2017mmWaveMIMONOMA}. A simplified mmWave channel model was proposed first. Whereafter, theoretical analysis on the achievable capacity was considered in both the low signal to noise ratio (SNR) and high-SNR regimes based on the dominant factors of signal to interference plus noise ratio. In \cite{xiao2018mmWaveNOMA}, a non-convex optimization problem was formulated and solved in a 2-user downlink mmWave-NOMA system, where power allocation and Tx beamforming were jointly optimized to maximize the achievable sum rate (ASR). Then, a joint power control and Rx beamforming problem in a 2-user uplink mmWave-NOMA system was formulated and solved in \cite{Zhu2018UplinkNOMA}. The joint power allocation/control and beamforming approaches in \cite{xiao2018mmWaveNOMA} and \cite{Zhu2018UplinkNOMA} can both achieve a near-upper-bound performance in terms of ASR. However, the schemes are confronted with great challenges to be generalized into a multiple user scenario.

Similar to \cite{xiao2018mmWaveNOMA} and \cite{Zhu2018UplinkNOMA}, we consider power allocation and analog beamforming for downlink mmWave-NOMA Networks in this paper. Particularly, the scenario is generalized to a multiple user case, and the power allocation is optimized jointly with Tx and Rx beamforming \footnote{Note that in a parallel submitted paper of us \cite{Xiao2018UserFairnessNOMA}, we also considered the generalization from a 2-user mmWave-NOMA to multi-user mmWave-NOMA, but the object in \cite{Xiao2018UserFairnessNOMA} is to maximize the minimal user rate via joint power allocation and Tx beamforming.}. We formulate an optimization problem to maximize the ASR of the multiple users, and meanwhile each user has a minimum rate constraint. To the best of our knowledge, this is the first work that considers \emph{joint Tx-Rx beamforming and power allocation} in mmWave-NOMA Networks. As the formulated problem is non-convex and it cannot be directly solved by using the existing optimization tools, we propose a sub-optimal solution with three stages. In the first stage, the optimal power allocation with a closed form is obtained for an arbitrary fixed Tx-Rx beamforming. In the second stage, we obtain the optimal Rx beamforming with a closed form for arbitrary fixed Tx beamforming. In the third stage, by substituting the optimal solutions of the previous two stages into the original problem, a Tx beamforming problem is formulated. We propose a boundary-compressed particle swarm optimization (BC-PSO) algorithm to solve this problem and obtain a sub-optimal solution. The simulation results show that the proposed solution can achieve a near-upper-bound performance in terms of ASR, which is significantly better than those of the state-of-the-art schemes and the conventional mmWave-OMA system.

The rest of the paper is organized as follows. In Section II, we present the system model and formulate the problem. In Section III, we propose the solution. In Section IV, simulation results are given to demonstrate the performance of the proposed solution, and the paper is concluded finally in Section V.

Symbol Notation: $a$, $\mathbf{a}$ and $\mathbf{A}$ denote a scalar variable and a vector, respectively. $(\cdot)^{\rm{T}}$ and $(\cdot)^{\rm{H}}$ denote transpose and conjugate transpose, respectively. $|\cdot|$ and $\|\cdot\|$ denote the absolute value and Euclidean norm, respectively. $\mathbb{E}(\cdot)$ denotes the expectation operation. $[\mathbf{a}]_i$ denotes the $i$-th entry of $\mathbf{a}$.

\section{System Model and Problem Formulation}
\subsection{System model}
\begin{figure}[t]
\begin{center}
  \includegraphics[width=8.8 cm]{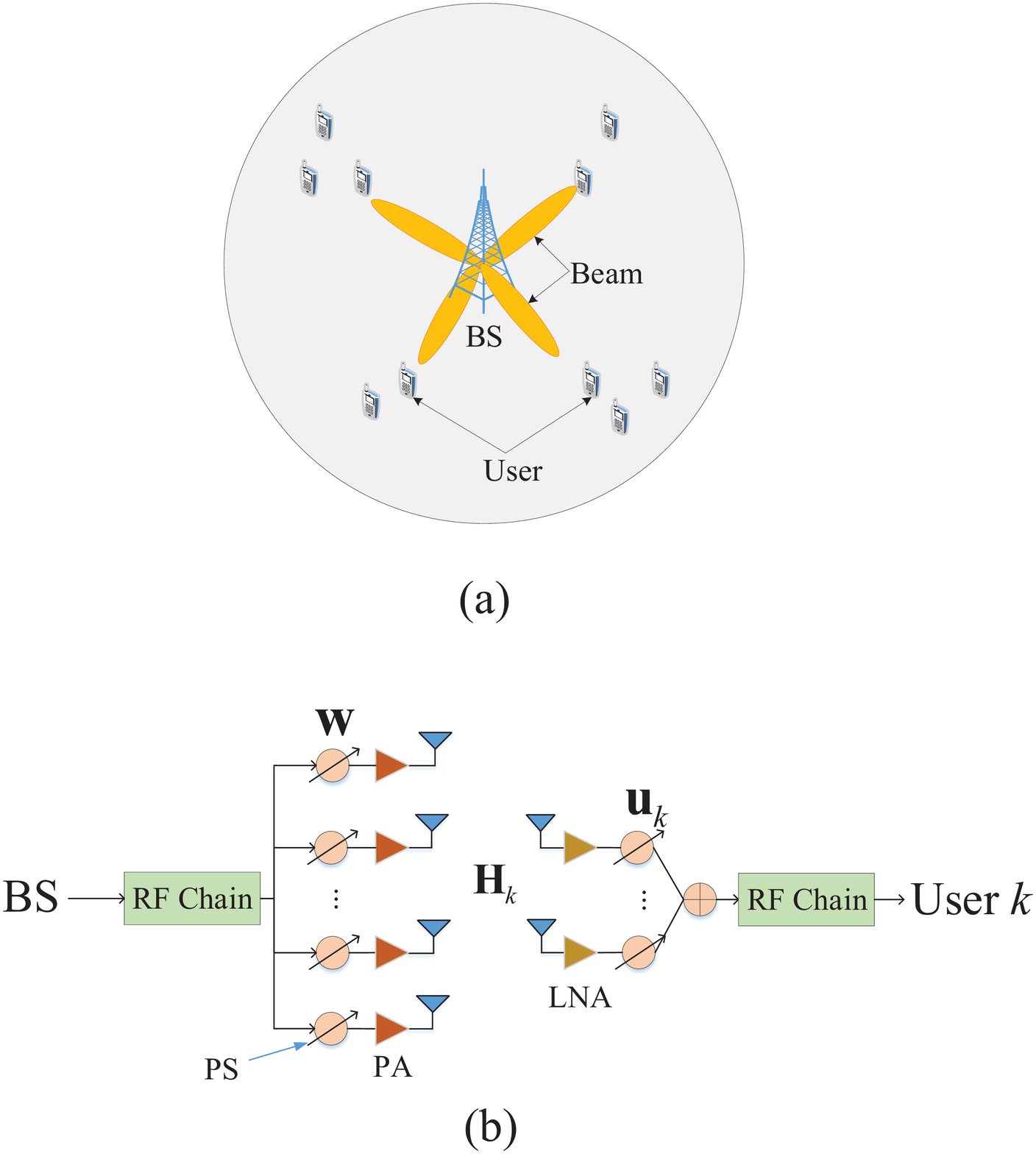}
  \caption{Illustration of a mmWave mobile cell, where the BS serves $K$ users simultaneously. The BS is equipped with a single RF chain and $N$ antennas, while each user is equipped with a single RF chain and $M$ antennas.}
  \label{fig:system}
\end{center}
\end{figure}
In this paper, we consider a downlink mmWave communications system. As shown in Fig. \ref{fig:system}, the BS serves $K$ users simultaneously. The numbers of the antennas equipped at the base station (BS) and each user are $N$ and $M$, respectively. Each antenna at the BS is driven by a phase shifter (PS) and a power amplifier (PA), while an antenna at the users is driven by a PS and low noise amplifier (LNA). Analog beamforming is utilized at both the BS and the user sides.

The BS transmits a signal $s_k$ to User $k~(k=1,2,\cdots, K)$ with transmission power $p_k$, where $\mathbb{E}(\left | s_{k} \right |^{2})=1$. The total transmission power of the BS is $P$. Thus, the received signal for User $k$ is
\begin{equation}
y_k=\mathbf{u}_{k}^{\rm{H}}\mathbf{H}_{k}\mathbf{w}\sum \limits_{k=1}^{K}\sqrt{p_{k}}s_{k}+n_{k},
\end{equation}
where $\mathbf{H}_{k}$ with dimension $M\times N$ is the channel response matrix between the BS and User $k$, and $n_k$ denotes the Gaussian white noise at User $k$ with power $\sigma^2$. $\mathbf{w}$ and $\mathbf{u}_{k}$ are the Tx beamforming vector (Tx-BFV) of the BS and the Rx beamforming vector (Rx-BFV) of User $k$, respectively. In general, the scaling factors of PA and LNA are constant. Thus, the Tx-BFV and Rx-BFV have constant modulus constraints, i.e.,
\begin{align}
&|[\mathbf{w}]_n|=\frac{1}{\sqrt{N}},~1\leq n \leq N,\\
&|[\mathbf{u}_{k}]_m|=\frac{1}{\sqrt{M}},~1\leq m \leq M,~1\leq k \leq K.
\end{align}

The channel between the BS and User $k$ is a mmWave channel \footnote{In this paper, we assume the channel is known by the BS. The mmWave channel estimation with low complexity can be referred in \cite{xiao2016codebook} and \cite{xiao2017codebook}.}. Subject to limited scattering in the mmWave band, multipath is mainly caused by reflection. As the number of the multipath components (MPCs) is small in general, the mmWave channel has directionality and appears spatial sparsity in the angle domain \cite{peng2015enhanced,wang2015multi,Lee2014exploiting,Gao2016ChannelEst,xiao2016codebook,alkhateeb2014channel}. Different MPCs have different angles of departure (AoDs) and angles of arrival (AoAs). Without loss of generality, we adopt the directional mmWave channel model assuming a uniform linear array (ULA) with a half-wavelength antenna space. Then, an mmWave channel between the BS and User $k$ can be expressed as \cite{peng2015enhanced,wang2015multi,Lee2014exploiting,Gao2016ChannelEst,xiao2016codebook,alkhateeb2014channel}
\begin{equation} \label{eq_oriChannel}
\mathbf{H}_{k}=\sum_{\ell=1}^{L_k}\lambda_{k,\ell}\mathbf{a}_{\rm{r}}(\theta_{k,\ell})\mathbf{a}_{\rm{t}}^{\rm{H}}(\psi_{k,\ell}),
\end{equation}
where $\lambda_{k,\ell}$, $\theta_{k,\ell}$ and $\psi_{k,\ell}$ are the complex coefficient, cos(AoD) and cos(AoA) of the $\ell$-th MPC of the channel vector for User $k$, respectively. $L_k$ is the total number of MPCs for User $k$, $\mathbf{a}_{\rm{t}}(\cdot)$ and $\mathbf{a}_{\rm{r}}(\cdot)$ are steering vectors defined as
\begin{align} \label{eq_steeringVCT}
&\mathbf{a}_{\rm{t}}(\theta)=[e^{j\pi0\theta},e^{j\pi1\theta},e^{j\pi2\theta},\cdots,e^{j\pi(N-1)\theta}]^{\mathrm{T}},\\
&\mathbf{a}_{\rm{r}}(\psi)=[e^{j\pi0\psi},e^{j\pi1\psi},e^{j\pi2\psi},\cdots,e^{j\pi(M-1)\psi}]^{\mathrm{T}},
\end{align}
which depend on the array geometry.

\subsection{Achievable Rate}
As discussed in the previous section, the optimal decoding order for NOMA is the increasing order of the users' channel gains in general. However, for the mmWave-NOMA with analog beamforming structure in this paper, the effective channel gains of the users are determined by both the channel gains and the beamforming gains. Thus, we need to sort the effective channel gains first, and then determine the decoding order. Without loss of generality, we assume that the order of the effective channel gains is $\left |\mathbf{u}_{\pi_1}^{\rm{H}}\mathbf{H}_{\pi_1}\mathbf{w} \right |^{2} \geq \left |\mathbf{u}_{\pi_2}^{\rm{H}}\mathbf{H}_{\pi_2}\mathbf{w} \right |^{2} \geq \cdots \geq \left |\mathbf{u}_{\pi_K}^{\rm{H}}\mathbf{H}_{\pi_K}\mathbf{w} \right |^{2}$, and thus the optimal decoding order is the increasing order of the effective channel gains \cite{xiao2018mmWaveNOMA,Ding2017random,Daill2017}. Therefore, User $\pi_k$ can decode $s_{\pi_n}~(k+1 \leq n \leq K)$ and then remove them from the received signal in a successive manner. The signals for User $\pi_{m}~(1\leq m \leq k-1)$ are treated as interference. Thus, the achievable rate of User $\pi_k$ is denoted by
\begin{equation}\label{eq_Rate}
R_{\pi_k}=\log_{2}(1+ \frac{\left | \mathbf{u}_{\pi_k}^{\rm{H}}\mathbf{H}_{\pi_k}\mathbf{w} \right |^{2}p_{\pi_k}}{\left | \mathbf{u}_{\pi_k}^{\rm{H}}\mathbf{H}_{\pi_k}\mathbf{w} \right |^{2}\sum \limits_{m=1}^{k-1}p_{\pi_m}+\sigma^{2}}).
\end{equation}

The achievable sum rate (ASR) of the proposed mmWave-NOMA system is
\begin{equation}
R_{\rm{sum}}=\sum\limits_{k=1}^{K}R_{k},
\end{equation}
where $R_k\in\{R_{\pi_k},k=1,2,...,K\}$ depending on the decoding order.

\subsection{Problem Formulation}
To improve the overall data rate, we formulate a joint Tx-Rx beamforming and power allocation problem to maximize the ASR of the $K$ users, where each user has a minimum rate constraint. The problem is formulated as
\begin{equation}\label{eq_problem}
\begin{aligned}
\mathop{\mathrm{Maximize}}\limits_{\{p_k\},\{\mathbf{u}_{k}\},\mathbf{w}}~~~~ &R_{\rm{sum}}\\
\mathrm{Subject~ to}~~~ &C_1~:~R_{k} \geq r_{k},~~\forall k, \\
&C_2~:~p_{k} \geq 0, ~~\forall k,\\
&C_3~:~\sum \limits_{k=1}^{K} p_{k} \leq P, \\
&C_4~:~|[\mathbf{u}_{k}]_{m}| = \frac{1}{\sqrt{M}},~~\forall k,m,\\
&C_5~:~|[\mathbf{w}]_{n}| = \frac{1}{\sqrt{N}},~~\forall n,
\end{aligned}
\end{equation}
where the constraint $C_1$ is the minimum rate constraint for each user. The constraint $C_2$ indicates that the power allocation to each user should be positive. The constraint $C_3$ is the total transmission power constraint, where the total power is no more than $P$. $C_4$ and $C_5$ are the CM constraints for the RxBFVs and TxBFV, respectively.

The total dimension of the variables in Problem \eqref{eq_problem} is $N+MK+K$, which is large in general. Direct search for the optimal solution results in heavy computational load, which is hard to accomplish in practice. To solve Problem \eqref{eq_problem}, there are two main challenges. One is that the optimized variables are entangled with each other, which makes the formulation non-convex. The other is that the expression of $R_{\rm{sum}}$ depends on the decoding order. In general, the optimal decoding order is the increasing order of the users' effective channel gains. However, the order of effective channel gains varies with different Tx-BFVs and Rx-BFVs. In other words, given different TxBFVs and RxBFVs, the objective function in Problem \eqref{eq_problem}, i.e., the ASR of the users, has different expressions. The two challenges make it infeasible to solve Problem \eqref{eq_problem} by using the existing optimization tools. Next, we will propose a sub-optimal solution with promising performance but low computational complexity.

\section{Solution of the Problem}
In this section, we propose a sub-optimal solution of Problem \eqref{eq_problem} with three stages. In the first stage, we obtain the optimal power allocation with a closed form for an arbitrary fixed Tx-BFV and arbitrary fixed Rx-BFVs. In the second stage, the optimal Rx-BFVs are obtained with a closed form for an arbitrary fixed Tx-BFV. In the third stage, we propose the BC-PSO algorithm to solve the Tx beamforming problem and obtain a sub-optimal Tx-BFV.
\subsection{Optimal Power Allocation with Arbitrary Fixed BFVs}
As we have analyzed before, an essential challenge to solve Problem \eqref{eq_problem} is the variation of the decoding order. However, given an arbitrary fixed Tx-BFV ${\bf{w}}$ and an arbitrary fixed Rx-BFVs ${\bf{u}}_k$, the order of the effective channel gains is fixed. For symbol simplicity and without loss of generality, we assume $\left |\mathbf{u}_{1}^{\rm{H}}\mathbf{H}_{1}\mathbf{w} \right |^{2} \geq \left |\mathbf{u}_{2}^{\rm{H}}\mathbf{H}_{2}\mathbf{w} \right |^{2} \geq \cdots \geq \left |\mathbf{u}_{K}^{\rm{H}}\mathbf{H}_{K}\mathbf{w} \right |^{2}$ in this subsection. The original problem can be simplified as
\begin{equation}\label{eq_problem2}
\begin{aligned}
\mathop{\mathrm{Maximize}}\limits_{\{p_k\}}~~~~ &R_{\rm{sum}}\\
\mathrm{Subject~ to}~~~ &C_1~:~R_{k} \geq r_{k},~~\forall k,\\
&C_2~:~p_{k} \geq 0, ~~\forall k,\\
&C_3~:~\sum \limits_{k=1}^{K} p_{k} \leq P.
\end{aligned}
\end{equation}
where the BFVs are arbitrary and fixed. To solve Problem \eqref{eq_problem2}, we give the following Lemma first.
\begin{lemma} The optimal power allocation in Problem \eqref{eq_problem2} must satisfy
\begin{equation}
\sum \limits_{k=1}^{K} p_{k} = P.
\end{equation}
\end{lemma}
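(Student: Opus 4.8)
The plan is a short proof by contradiction that exploits the structure of the rate expression \eqref{eq_Rate}. Suppose, to the contrary, that some optimal solution $\{p_k^*\}$ of Problem \eqref{eq_problem2} satisfied $\sum_{k=1}^{K}p_k^* < P$, and set $\delta := P - \sum_{k=1}^{K}p_k^* > 0$. I would construct from it a feasible point with strictly larger ASR, which contradicts optimality and hence proves that $\sum_{k=1}^{K}p_k = P$ must hold at the optimum.

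The crucial observation is that, under the fixed ordering of this subsection, $|\mathbf{u}_{1}^{\rm{H}}\mathbf{H}_{1}\mathbf{w}|^{2}\geq\cdots\geq|\mathbf{u}_{K}^{\rm{H}}\mathbf{H}_{K}\mathbf{w}|^{2}$, the power $p_K$ of the user with the smallest effective channel gain appears in \eqref{eq_Rate} only in the numerator of $R_K$: it never enters the interference term $\sum_{m=1}^{k-1}p_m$ of any $R_k$, since that sum runs only over indices $m\le k-1\le K-1$. So I would keep $\hat p_k = p_k^*$ for $k=1,\dots,K-1$ and set $\hat p_K = p_K^* + \delta$. Under $\{\hat p_k\}$ the rates $R_1,\dots,R_{K-1}$ are unchanged, while $R_K$ strictly increases, because its numerator grows by $|\mathbf{u}_{K}^{\rm{H}}\mathbf{H}_{K}\mathbf{w}|^{2}\delta>0$ and its denominator is untouched, provided $|\mathbf{u}_{K}^{\rm{H}}\mathbf{H}_{K}\mathbf{w}|^{2}>0$. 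Feasibility is immediate: $C_2$ holds since $\hat p_K\ge p_K^*\ge 0$, $C_3$ now holds with equality, and $C_1$ still holds because no rate has decreased. Hence $R_{\rm{sum}}(\{\hat p_k\})>R_{\rm{sum}}(\{p_k^*\})$, contradicting optimality.

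The only point needing care — and the one I expect to be the (minor) obstacle — is the degenerate case $|\mathbf{u}_{K}^{\rm{H}}\mathbf{H}_{K}\mathbf{w}|^{2}=0$, where $R_K\equiv 0$ and adding power to User $K$ accomplishes nothing. This is resolved by instead assigning the surplus $\delta$ to the user with the largest index $k^{\star}$ whose effective channel gain is strictly positive: by the assumed ordering every user with index larger than $k^{\star}$ then has zero effective gain and thus zero rate for any power, so those rates are unaffected, while $R_{k^{\star}}$ increases strictly exactly as before, again contradicting optimality. (If no user has positive effective channel gain, the ASR is identically zero under the fixed beamformers and the statement is trivial.) Combining the two cases yields the lemma.
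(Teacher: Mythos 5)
Your proposal is correct and follows essentially the same route as the paper's own proof: a contradiction argument that assigns the unused power $\delta = P-\sum_k p_k^{\star}$ to User $K$, whose power enters only the numerator of $R_K$ under the assumed ordering, leaving $R_1,\dots,R_{K-1}$ unchanged and strictly increasing the ASR. Your extra treatment of the degenerate case $\left|\mathbf{u}_{K}^{\rm{H}}\mathbf{H}_{K}\mathbf{w}\right|^{2}=0$ is a minor refinement the paper leaves implicit.
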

\begin{proof}
We prove Lemma 1 by using contradiction. Denote the optimal power allocation in Problem \eqref{eq_problem2} as $\{p_{k}^{\star}\}$, and the achievable rate of User $k$ under optimal power allocation is denoted by $R_{k}^{\star}$. Assume $\sum \limits_{k=1}^{K} p_{k}^{\star}< P$.

Consider the following parameter settings
\begin{equation}\label{para_set}
\left\{
\begin{aligned}
&p_{k}=p_{k}^{\star}, ~1\leq k \leq K-1 \\
&p_{K}=P-\sum \limits_{k=1}^{K-1} p_{k}^{\star}.
\end{aligned}
\right.
\end{equation}
It is easy to verify that $R_{k}=R_{k}^{\star}~(1 \leq k \leq K-1)$ and $R_{K}>R_{K}^{\star}$, which means that the parameter settings in \eqref{para_set} can satisfy the minimum rate constraint as well as improve the ASR in Problem \eqref{eq_problem2}. It is in contrast to the assumption that $\{p_{k}^{\star}\}$ is optimal. Thus, we have $\sum \limits_{k=1}^{K} p_{k}^{\star}= P$.
\end{proof}

According to Lemma 1, Problem is equivalent to
\begin{equation}\label{eq_problem3}
\begin{aligned}
\mathop{\mathrm{Maximize}}\limits_{\{p_k\}}~~~~ &R_{\rm{sum}}\\
\mathrm{Subject~ to}~~~ &C_1~:~R_{k} \geq r_{k},~~\forall k,\\
&C_2~:~p_{k} \geq 0, ~~\forall k,\\
&C_3~:~\sum \limits_{k=1}^{K} p_{k} = P.
\end{aligned}
\end{equation}

As the number of users is $K$, it is difficult to directly obtain the optimal power allocation for all the users. Thus, we commence from a simplified case, where only two adjacent users can adjust the transmission power while the other users have fixed transmission power. The details are shown in the following Lemma.

\begin{lemma} For any $k_{0}$ ranging from 2 to $K$, if $p_{k}~ (k=1,2,\cdots,p_{k_{0}-2},p_{k_{0}+1},\cdots,p_{K})$ are all fixed, then $R_{\rm{sum}}$ in Problem \eqref{eq_problem3} is decreasing for $p_{k_{0}}$.
\end{lemma}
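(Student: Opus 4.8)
The plan is to fix $k_0 \in \{2, \dots, K\}$ and the powers $p_k$ for $k \notin \{k_0-1, k_0\}$, and to treat $R_{\rm{sum}}$ as a function of the single free variable $p_{k_0}$, subject to the constraint $C_3$, which (by Lemma 1) forces $p_{k_0-1} + p_{k_0} = P - \sum_{k \neq k_0-1, k_0} p_k =: c$, a constant. So as $p_{k_0}$ increases by $\delta$, $p_{k_0-1}$ decreases by $\delta$, and everything else is held fixed; I want to show $R_{\rm{sum}}$ is (weakly) decreasing in $p_{k_0}$. First I would isolate which terms in $R_{\rm{sum}} = \sum_{k=1}^K R_k$ actually depend on this reallocation. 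From the rate expression \eqref{eq_Rate}, $R_k$ for User $k$ depends on $p_k$ (numerator) and on $\sum_{m=1}^{k-1} p_m$ (interference from stronger-channel users, i.e.\ lower index). Under the assumed ordering, User $k_0-1$ has the stronger effective channel gain. The key observation is that only $R_{k_0-1}$, $R_{k_0}$, and the interference sum seen by users with index $> k_0$ could change — but the latter interference sum $\sum_{m=1}^{k-1} p_m$ for $k > k_0$ includes \emph{both} $p_{k_0-1}$ and $p_{k_0}$, hence its value is $p_{k_0-1} + p_{k_0} + (\text{fixed}) = c + (\text{fixed})$, which is unchanged. Therefore $R_k$ is unchanged for all $k \neq k_0-1, k_0$, and the problem reduces to showing $R_{k_0-1} + R_{k_0}$ is decreasing in $p_{k_0}$.

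Next I would write out these two terms explicitly. Let $g_{k_0-1} = |\mathbf{u}_{k_0-1}^{\rm{H}}\mathbf{H}_{k_0-1}\mathbf{w}|^2$ and $g_{k_0} = |\mathbf{u}_{k_0}^{\rm{H}}\mathbf{H}_{k_0}\mathbf{w}|^2$, so that $g_{k_0-1} \geq g_{k_0}$ by the ordering assumption, and let $I = \sigma^2 + g_{\cdot}\sum_{m=1}^{k_0-2} p_m$ denote the (channel-scaled) noise-plus-prior-interference baseline — being careful that the scaling factor in front of $\sum_{m=1}^{k-1}p_m$ is the user's own $g$. Writing $A = \sum_{m=1}^{k_0-2}p_m$ (fixed) and using $p_{k_0-1} = c - p_{k_0}$, the sum of interest is
\begin{equation}
f(p_{k_0}) = \log_2\!\left(1 + \frac{g_{k_0-1}(c - p_{k_0})}{g_{k_0-1}A + \sigma^2}\right) + \log_2\!\left(1 + \frac{g_{k_0} p_{k_0}}{g_{k_0}(A + c - p_{k_0}) + \sigma^2}\right).
\end{equation}
I would then differentiate $f$ with respect to $p_{k_0}$ (or, to avoid logarithms, compare $f(p_{k_0})$ and $f(p_{k_0} + \delta)$ via the ratio of the corresponding SINR-plus-one products) and show the derivative is $\leq 0$. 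The first term is manifestly decreasing. For the second term, note the numerator increases in $p_{k_0}$ but the denominator \emph{decreases} in $p_{k_0}$ (since $c - p_{k_0} = p_{k_0-1}$ shrinks), so the second term is increasing — the two effects compete. Combining the two logs into $\log_2$ of a single rational function and clearing denominators, the sign of $f'$ reduces to the sign of a polynomial in $p_{k_0}$ whose dominant contribution should carry the factor $(g_{k_0-1} - g_{k_0}) \geq 0$ in the "right" direction; this is where the ordering hypothesis $g_{k_0-1} \geq g_{k_0}$ is essential.

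The main obstacle I anticipate is precisely this last algebraic step: after combining the two logarithmic terms, verifying that the resulting rational function is monotone in $p_{k_0}$ for \emph{all} feasible $p_{k_0} \in [0, c]$ and all nonnegative values of the fixed parameters $A$, $\sigma^2$, $g_{k_0-1} \geq g_{k_0} > 0$ requires care — one must check that no sign change sneaks in over the feasible interval. A cleaner route I would try first is to argue monotonicity \emph{structurally}: moving power from the weaker user $k_0$ to the stronger user $k_0-1$ is a "SIC-favorable" reallocation because the stronger user sees the weaker user's signal only as part of what it can cancel or as interference with a larger channel scaling, and the standard NOMA fact is that concentrating power toward higher-channel-gain users (subject to total power being fixed and the decoding order respecting the gain order) does not decrease sum rate. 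If that intuition can be turned into a short comparison inequality of the form $\log(1 + a x)(1 + b) \geq \log(1 + a')(1 + b' x)$ type, it would bypass the messy derivative. Either way, the conclusion is that $R_{\rm{sum}}$ is non-increasing in $p_{k_0}$, which is exactly the claim.
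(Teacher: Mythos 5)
Your setup coincides with the paper's: fix all powers except $p_{k_0-1},p_{k_0}$, use the total-power equality to write $p_{k_0-1}=c-p_{k_0}$, note that every $R_k$ with $k\notin\{k_0-1,k_0\}$ is unchanged (for $k>k_0$ because its interference contains only the constant sum $p_{k_0-1}+p_{k_0}=c$ plus fixed terms), and reduce the claim to monotonicity of $f(p_{k_0})=R_{k_0-1}+R_{k_0}$. The genuine gap is that you stop exactly at the step that \emph{is} the lemma: you only conjecture that the derivative's sign "should" carry the factor $(g_{k_0-1}-g_{k_0})$, you explicitly flag a possible sign change over the feasible interval as an unresolved obstacle, and the fallback "SIC-favorable reallocation" argument is an unproven intuition rather than a comparison inequality. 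As written, nothing in the proposal establishes $f'\le 0$.

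The computation you deferred is in fact short and leaves no room for a sign change, which is how the paper closes it. With $g_k:=|\mathbf{u}_{k}^{\rm H}\mathbf{H}_{k}\mathbf{w}|^2$, $A=\sum_{m=1}^{k_0-2}p_m$ and $S=A+c-p_{k_0}=\sum_{m=1}^{k_0-1}p_m$, note that
\begin{equation*}
1+\frac{g_{k_0}p_{k_0}}{g_{k_0}(A+c-p_{k_0})+\sigma^2}=\frac{g_{k_0}(A+c)+\sigma^2}{g_{k_0}S+\sigma^2},
\end{equation*}
whose numerator does not depend on $p_{k_0}$; hence $\frac{d R_{k_0}}{d p_{k_0}}=\frac{1}{\ln 2}\,\frac{g_{k_0}}{g_{k_0}S+\sigma^2}$, while directly $\frac{d R_{k_0-1}}{d p_{k_0}}=-\frac{1}{\ln 2}\,\frac{g_{k_0-1}}{g_{k_0-1}S+\sigma^2}$, so
\begin{equation*}
f'(p_{k_0})=\frac{1}{\ln 2}\cdot\frac{(g_{k_0}-g_{k_0-1})\,\sigma^2}{(g_{k_0}S+\sigma^2)(g_{k_0-1}S+\sigma^2)}\le 0 ,
\end{equation*}
since $g_{k_0-1}\ge g_{k_0}$ and both denominator factors are positive. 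The only $p_{k_0}$-dependence is through $S\ge 0$, which cannot flip the sign, so monotonicity holds uniformly on all of $[0,c]$; the polynomial-sign analysis you feared, and the alternative structural inequality, are unnecessary. Completing your draft with this two-line derivative calculation makes it essentially identical to the paper's proof.
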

\begin{proof}
With fixed $p_{k}~ (k=1,2,\cdots,p_{k_{0}-2},p_{k_{0}+1},\cdots,p_{K})$, it is easy to verify that $R_{k}~ (k=1,2,\cdots,p_{k_{0}-2},p_{k_{0}+1},\cdots,p_{K})$ are constant. According to the constraint $C_{3}$ in Problem \eqref{eq_problem3}, we have
\begin{equation}
\begin{aligned}
&p_{k_{0}-1}+p_{k_{0}}+\sum\limits_{k\neq k_{0}-1,k_{0}}p_{k}=P\\
\Rightarrow &p_{k_{0}-1}=(P-\sum\limits_{k\neq k_{0}-1,k_{0}}p_{k})-p_{k_{0}}\triangleq \widetilde{P}-p_{k_{0}}.
\end{aligned}
\end{equation}
Thus, there is only one independent variable $p_{k_{0}}$ in Problem \eqref{eq_problem3}. The differential of the objective function $R_{\rm{sum}}$ is shown in \eqref{diff}.
\begin{figure*}
\begin{equation}\label{diff}
\begin{aligned}
&\frac{d~R_{\rm{sum}}}{d~p_{k_{0}}}=\frac{d~(R_{k_{0}}+R_{k_{0}-1})}{d~p_{k_{0}}}\\
=&\frac{d~[\log_{2}(1+\frac{\left |\mathbf{u}_{k_{0}}^{\rm{H}}\mathbf{H}_{k_{0}}\mathbf{w} \right |^{2}p_{k_{0}}}{\left |\mathbf{u}_{k_{0}}^{\rm{H}}\mathbf{H}_{k_{0}}\mathbf{w} \right |^{2}(\sum \limits_{m=1}^{k_{0}-2}p_{m}+\widetilde{P}-p_{k_{0}})+\sigma^{2}})]}{d~p_{k_{0}}}+\frac{d~[\log_{2}(1+\frac{\left |\mathbf{u}_{k_{0}-1}^{\rm{H}}\mathbf{H}_{k_{0}-1}\mathbf{w} \right |^{2}(\widetilde{P}-p_{k_{0}})}{\left |\mathbf{u}_{k_{0}-1}^{\rm{H}}\mathbf{H}_{k_{0}-1}\mathbf{w} \right |^{2}\sum \limits_{m=1}^{k_{0}-2}p_{m}+\sigma^{2}})]}{d~p_{k_{0}}}\\
=&\frac{1}{\ln2}\frac{\left |\mathbf{u}_{k_{0}}^{\rm{H}}\mathbf{H}_{k_{0}}\mathbf{w} \right |^{2}}{\left |\mathbf{u}_{k_{0}}^{\rm{H}}\mathbf{H}_{k_{0}}\mathbf{w} \right |^{2}\sum\limits_{m=1}^{k_{0}-1}p_{m}+\sigma^2}-\frac{1}{\ln2}\frac{\left |\mathbf{u}_{k_{0}-1}^{\rm{H}}\mathbf{H}_{k_{0}-1}\mathbf{w} \right |^{2}}{\left |\mathbf{u}_{k_{0}-1}^{\rm{H}}\mathbf{H}_{k_{0}-1}\mathbf{w} \right |^{2}\sum\limits_{m=1}^{k_{0}-1}p_{m}+\sigma^2}\\
=&\frac{1}{\ln2}\frac{(\left |\mathbf{u}_{k_{0}}^{\rm{H}}\mathbf{H}_{k_{0}}\mathbf{w} \right |^{2}-\left |\mathbf{u}_{k_{0}-1}^{\rm{H}}\mathbf{H}_{k_{0}-1}\mathbf{w} \right |^{2})\sigma^2}{(\left |\mathbf{u}_{k_{0}}^{\rm{H}}\mathbf{H}_{k_{0}}\mathbf{w} \right |^{2}\sum\limits_{m=1}^{k_{0}-1}p_{m}+\sigma^2)(\left |\mathbf{u}_{k_{0}-1}^{\rm{H}}\mathbf{H}_{k_{0}-1}\mathbf{w} \right |^{2}\sum\limits_{m=1}^{k_{0}-1}p_{m}+\sigma^2)}.
\end{aligned}
\end{equation}
\hrulefill
\end{figure*}

As we have assumed that $\left |\mathbf{u}_{1}^{\rm{H}}\mathbf{H}_{1}\mathbf{w} \right |^{2} \geq \left |\mathbf{u}_{2}^{\rm{H}}\mathbf{H}_{2}\mathbf{w} \right |^{2} \geq \cdots \geq \left |\mathbf{u}_{K}^{\rm{H}}\mathbf{H}_{K}\mathbf{w} \right |^{2}$, thus $\frac{d R_{\rm{sum}}}{d p_{k_{0}}}\leq 0$. We can conclude that $R_{\rm{sum}}$ is decreasing for $p_{k_{0}}$.
\end{proof}

Based on Lemma 2, we can find that the priority of power allocation in Problem \eqref{eq_problem3} is $p_{1}\succ p_{2} \succ \cdots \succ p_{K}$, where $\succ$ denotes higher priority. In other words, the BS tends to allocate more power to the user with the highest effective channel gain, and meanwhile allocate only necessary power to the other users to satisfy the minimum rate constraints. We give the following Theorem to illustrate this property and obtain the optimal power allocation.
\begin{theorem} The optimal solution in Problem \eqref{eq_problem3} must satisfy $R_{k}=r_{k}~(2\leq k\leq K)$, and the optimal power allocation is given by
\begin{equation}\label{opt_power}
\left \{
\begin{aligned}
&p_{K}^{\star}=\frac{\eta_{K}}{\eta_{K}+1}(P+\frac{\sigma^2}{\left |\mathbf{u}_{K}^{\rm{H}}\mathbf{H}_{K}\mathbf{w} \right |^{2}}),\\
&p_{K-1}^{\star}=\frac{\eta_{K-1}}{\eta_{K-1}+1}(P-p_{K}^{\star}+\frac{\sigma^2}{\left |\mathbf{u}_{K-1}^{\rm{H}}\mathbf{H}_{K-1}\mathbf{w} \right |^{2}}),\\
&~~~~\vdots\\
&p_{2}^{\star}=\frac{\eta_{2}}{\eta_{2}+1}(P-\sum\limits_{m=3}^{K}p_{m}^{\star}+\frac{\sigma^2}{\left |\mathbf{u}_{2}^{\rm{H}}\mathbf{H}_{2}\mathbf{w} \right |^{2}}),\\
&p_{1}^{\star}=P-\sum\limits_{m=2}^{K}p_{m}^{\star},
\end{aligned}
\right.
\end{equation}
where $\eta_{k}=2^{r_{k}}-1$.
\end{theorem}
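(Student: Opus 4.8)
The plan is to use Lemmas~1 and~2 to pin down the structure of any optimal allocation --- every user except the one with the strongest effective channel must run exactly at its minimum rate --- and then to turn those rate equalities into the stated closed form by a top-down algebraic recursion.

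\textbf{Step 1 (structure of the optimum).} I would show by contradiction that an optimal allocation $\{p_k^\star\}$ satisfies $R_k^\star=r_k$ for every $k\ge 2$. Suppose instead $R_{k_0}^\star>r_{k_0}$ for some $k_0\in\{2,\dots,K\}$. Perturb only the adjacent pair by setting $p_{k_0}\leftarrow p_{k_0}^\star-\varepsilon$ and $p_{k_0-1}\leftarrow p_{k_0-1}^\star+\varepsilon$ for small $\varepsilon>0$, keeping every other power fixed; this is precisely the one-parameter family used in the proof of Lemma~2. It preserves $\sum_k p_k=P$; it does not change the ordering of the effective channel gains $|\mathbf{u}_k^{\rm{H}}\mathbf{H}_k\mathbf{w}|^2$ (hence the decoding order), since that ordering does not depend on the $p_k$; and it leaves $R_k$ unchanged for $k\neq k_0-1,k_0$, because the interference sum of every other user either omits both $p_{k_0-1}$ and $p_{k_0}$ (users $k<k_0-1$) or contains their now-constant sum (users $k>k_0$). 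Constraint $C_1$ for User~$k_0-1$ still holds because $R_{k_0-1}$ increases; and $R_{k_0}$ is continuous and strictly decreasing along the path (decreasing $p_{k_0}$ shrinks the numerator while increasing $p_{k_0-1}$ inflates the interference term), so for $\varepsilon$ small enough $C_1$ for User~$k_0$ is still met. By Lemma~2 (equivalently, the sign of \eqref{diff}) $R_{\rm sum}$ is non-decreasing along the path, and strictly increasing when $|\mathbf{u}_{k_0-1}^{\rm{H}}\mathbf{H}_{k_0-1}\mathbf{w}|^2>|\mathbf{u}_{k_0}^{\rm{H}}\mathbf{H}_{k_0}\mathbf{w}|^2$, contradicting optimality (in the degenerate equal-gain case the perturbation leaves $R_{\rm sum}$ unchanged, so one may still select an optimal allocation with $R_{k_0}=r_{k_0}$). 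Applying this for each $k_0\in\{2,\dots,K\}$ gives $R_k=r_k$ for all $k\ge 2$.

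\textbf{Step 2 (closed form).} With $R_k=r_k$, \eqref{eq_Rate} is equivalent to $|\mathbf{u}_k^{\rm{H}}\mathbf{H}_k\mathbf{w}|^2 p_k=(2^{r_k}-1)\big(|\mathbf{u}_k^{\rm{H}}\mathbf{H}_k\mathbf{w}|^2\sum_{m=1}^{k-1}p_m+\sigma^2\big)$, i.e. $p_k=\eta_k\big(\sum_{m=1}^{k-1}p_m+\sigma^2/|\mathbf{u}_k^{\rm{H}}\mathbf{H}_k\mathbf{w}|^2\big)$ with $\eta_k=2^{r_k}-1$. Invoking Lemma~1 to write $\sum_{m=1}^{k-1}p_m=P-\sum_{m=k}^{K}p_m$ and isolating $p_k$ gives $p_k(1+\eta_k)=\eta_k\big(P-\sum_{m=k+1}^{K}p_m+\sigma^2/|\mathbf{u}_k^{\rm{H}}\mathbf{H}_k\mathbf{w}|^2\big)$, which is exactly \eqref{opt_power} for $k=K,K-1,\dots,2$ (the trailing sum is empty at $k=K$, reproducing $p_K^\star$), and then $p_1^\star=P-\sum_{m=2}^{K}p_m^\star$ by Lemma~1. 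The recursion is well posed: $p_K^\star$ is fixed outright, then $p_{K-1}^\star$, and so on down to $p_1^\star$. Since Problem~\eqref{eq_problem3} has a compact feasible region, an optimizer exists whenever the problem is feasible, so this candidate is the optimal allocation; feasibility also certifies $p_k^\star\ge 0$ and $R_1\ge r_1$ at this point.

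\textbf{Main obstacle.} The delicate part is Step~1: verifying that a strictly improving (or at least rate-preserving) feasible perturbation genuinely exists at a purported optimum --- the continuity and monotonicity of $R_{k_0}$ along the adjacent-pair path, and the existence of an admissible $\varepsilon>0$ --- and correctly handling the non-strict inequality in \eqref{diff} when two effective channel gains coincide. Step~2 is then only bookkeeping.
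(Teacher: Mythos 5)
Your proof is correct and follows essentially the same route as the paper's Appendix A: a contradiction argument that shifts power from User $k_0$ to User $k_0-1$ and invokes Lemma 2's monotonicity to force $R_k=r_k$ for $k\ge 2$, then solves those rate equalities together with Lemma 1's full-power condition to get \eqref{opt_power}. The only difference is cosmetic --- the paper chooses an explicit transfer amount $\delta$ (placing the new $R_{k_0}$ at $(R_{k_0}^{\star}+r_{k_0})/2$) where you use a small-$\varepsilon$ continuity argument, and your explicit handling of the tied-gain case is if anything more careful than the paper's.
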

\begin{proof}
See Appendix A.
\end{proof}

Based on Theorem 1, we can find that although the users with lower effective channel gains are prior in the decoding order, the achievable rates of them have no gain compared with the minimum rate constraints. The power allocated to Users $2$-$K$ is only necessary to satisfy the minimum rate constraint. The performance gain of mmWave-NOMA depends mainly on the user with the highest effective channel gain, i.e., User 1.

\subsection{Optimal Rx-BFVs with an arbitrary fixed Tx-BFV}
In the first stage, we obtained the closed-form power allocation with arbitrary fixed BFVs as shown in \eqref{opt_power}. In the second stage, we will handle the Rx beamforming. Given an arbitrary fixed Tx-BFV, Problem \eqref{eq_problem} is simplified as
\begin{equation}\label{eq_problem4}
\begin{aligned}
\mathop{\mathrm{Maximize}}\limits_{\{p_k\},\{\mathbf{u}_{k}\}}~~~~ &R_{\rm{sum}}\\
\mathrm{Subject~ to}~~~ &C_1~:~R_{k} \geq r_{k},~~\forall k, \\
&C_2~:~p_{k} \geq 0, ~~\forall k,\\
&C_3~:~\sum \limits_{k=1}^{K} p_{k} \leq P, \\
&C_4~:~|[\mathbf{u}_{k}]_{m}| = \frac{1}{\sqrt{M}},~~\forall k,m.
\end{aligned}
\end{equation}

To obtain the optimal Rx beamforming, we have the following Theorem.
\begin{theorem} The optimal solution of the Rx-BFVs in Problem \eqref{eq_problem4} is
\begin{equation}\label{opt_RxBFV}
[\mathbf{u}_{k}^{\star}]_{m}=\frac{1}{\sqrt{M}}\frac{[\mathbf{H}_{k}\mathbf{w}]_{m}}{|[\mathbf{H}_{k}\mathbf{w}]_{m}|},~~\forall k,m.
\end{equation}
\end{theorem}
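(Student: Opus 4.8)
The plan is to exploit that, with the Tx-BFV $\mathbf{w}$ fixed, each Rx-BFV $\mathbf{u}_k$ enters Problem \eqref{eq_problem4} only through the scalar effective channel gain $g_k:=|\mathbf{u}_k^{\rm H}\mathbf{H}_k\mathbf{w}|^2$: this is the only place $\mathbf{u}_k$ appears in the rate $R_k$, in the rate constraint $C_1$, and --- via the sorting that fixes the decoding order --- in the objective $R_{\rm sum}$. It therefore suffices to establish two facts: (i) $\mathbf{u}_k^\star$ in \eqref{opt_RxBFV} maximizes $g_k$ over all vectors obeying the CM constraint $C_4$; and (ii) raising the effective channel gains of all users simultaneously cannot decrease the optimal value of Problem \eqref{eq_problem4}. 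Since $\{\mathbf{u}_k^\star\}$ achieves the largest possible $g_k$ for every $k$, (i) and (ii) together give the claim.

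Fact (i) is elementary. Writing $\mathbf{u}_k^{\rm H}\mathbf{H}_k\mathbf{w}=\sum_{m=1}^{M}\overline{[\mathbf{u}_k]_m}\,[\mathbf{H}_k\mathbf{w}]_m$ and using $|[\mathbf{u}_k]_m|=1/\sqrt{M}$, the triangle inequality gives
\[
|\mathbf{u}_k^{\rm H}\mathbf{H}_k\mathbf{w}|\;\le\;\sum_{m=1}^{M}|[\mathbf{u}_k]_m|\,|[\mathbf{H}_k\mathbf{w}]_m|\;=\;\frac{1}{\sqrt{M}}\sum_{m=1}^{M}|[\mathbf{H}_k\mathbf{w}]_m|,
\]
with equality if and only if all summands $\overline{[\mathbf{u}_k]_m}\,[\mathbf{H}_k\mathbf{w}]_m$ share a common argument, i.e. $[\mathbf{u}_k]_m=\frac{1}{\sqrt{M}}[\mathbf{H}_k\mathbf{w}]_m/|[\mathbf{H}_k\mathbf{w}]_m|$ up to a global phase (for indices with $[\mathbf{H}_k\mathbf{w}]_m\neq0$; the phases of any zero entries do not affect $g_k$). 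Hence $g_k$ is maximized precisely by \eqref{opt_RxBFV}, with maximum value $\frac{1}{M}\big(\sum_m|[\mathbf{H}_k\mathbf{w}]_m|\big)^2$.

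For fact (ii), let $(\{p_k^{o}\},\{\mathbf{u}_k^{o}\})$ be any feasible point of Problem \eqref{eq_problem4}, with induced (sorted) decoding order $\pi$, effective gains $g_k^{o}$, and per-user rates $R_{\pi_k}^{o}$; let $g_k^{\star}$ denote the matched-filter gains, so $g_k^{\star}\ge g_k^{o}$. Keeping the order $\pi$, I would choose the powers $p_{\pi_1},p_{\pi_2},\ldots,p_{\pi_K}$ recursively so that user $\pi_k$ attains exactly the rate $R_{\pi_k}^{o}$, namely $p_{\pi_k}=\gamma_k\big(\sum_{m<k}p_{\pi_m}+\sigma^2/g_{\pi_k}^{\star}\big)$ with $\gamma_k=2^{R_{\pi_k}^{o}}-1$. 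Comparing with $p_{\pi_k}^{o}=\gamma_k\big(\sum_{m<k}p_{\pi_m}^{o}+\sigma^2/g_{\pi_k}^{o}\big)$ and using $g_{\pi_k}^{\star}\ge g_{\pi_k}^{o}$, a one-line induction on $k$ yields $\sum_{m\le k}p_{\pi_m}\le\sum_{m\le k}p_{\pi_m}^{o}$ for all $k$; in particular $\sum_k p_{\pi_k}\le\sum_k p_{\pi_k}^{o}\le P$. Thus this allocation, together with the matched-filter Rx-BFVs, is feasible, meets every minimum-rate constraint ($R_{\pi_k}=R_{\pi_k}^{o}\ge r_{\pi_k}$), and attains the same ASR as the original point. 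Because the sorted decoding order is optimal (the standard NOMA fact recalled in Section II), the optimal value of Problem \eqref{eq_problem4} at the Rx-BFVs $\{\mathbf{u}_k^\star\}$ is at least this ASR, hence at least the optimal value of Problem \eqref{eq_problem4}, so $\{\mathbf{u}_k^\star\}$ is optimal. Equivalently, one may invoke Theorem 1: after relabeling so that $g_1\ge\cdots\ge g_K$, the optimal ASR equals $\log_2(1+g_1p_1^{\star}/\sigma^2)+\sum_{k=2}^{K}r_k$, and unrolling \eqref{opt_power} shows $p_1^{\star}$ is independent of $g_1$ and nondecreasing in each of $g_2,\ldots,g_K$, so the ASR is nondecreasing in every $g_k$.

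The step I expect to be the main obstacle is fact (ii): changing the Rx-BFVs perturbs the effective channel gains and can therefore permute the SIC decoding order, so one cannot simply claim that ``every user's rate improves.'' The fix is to absorb the gain increase into a redistribution of power that holds all rates fixed while the total transmit power only decreases --- this keeps the original ASR achievable without disturbing the ordering --- and then to appeal to optimality of the sorted decoding order to conclude that the matched-filter configuration performs at least as well.
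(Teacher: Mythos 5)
Your proof is correct, and its skeleton matches the paper's: both reduce the Rx design to maximizing each effective gain $\left|\mathbf{u}_k^{\rm H}\mathbf{H}_k\mathbf{w}\right|^2$ under the CM constraint, and both must then argue that enlarging the gains cannot decrease the optimum of Problem \eqref{eq_problem4}. The mechanisms differ, however. The paper fixes an arbitrary decoding order and, implicitly, the power allocation, introduces $b_{\pi_k}=\left|\mathbf{u}_{\pi_k}^{\rm H}\mathbf{H}_{\pi_k}\mathbf{w}\right|^2$, shows by differentiation that $R_{\pi_k}$ is nondecreasing in $b_{\pi_k}$ and independent of the other users' gains, and concludes that per-element phase alignment --- which does not depend on the order --- is optimal under every decoding order; it neither spells out the triangle-inequality equality analysis nor explicitly treats the possibility that switching to the matched filters permutes the sorted order. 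Your fact (i) supplies that omitted maximization argument, and your fact (ii) replaces the fixed-power monotonicity step by a rate-preserving power reallocation (each user held at its original rate, with an induction showing the partial power sums only shrink), after which you invoke the optimality of the sorted decoding order --- the same standard fact the paper itself asserts with citations in Section II --- to absorb any change of order. So you are explicit precisely where the paper is terse, at the cost of a longer argument. One caveat: your secondary remark ``equivalently, invoke Theorem 1'' needs more care than stated, because the sorted ASR expression changes branch when the ordering of the gains changes, and the monotonicity of $p_1^{\star}$ obtained by unrolling \eqref{opt_power} is established only within a fixed ordering; your primary argument (reallocation plus sorted-order optimality) is the one that closes the proof cleanly.
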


\begin{proof}
As the Tx-BFV is fixed, $\mathbf{H}_{k}\mathbf{w}~(1\leq k \leq K)$ are all constant vectors. Given an arbitrary decoding order of $\pi_{K},\pi_{K-1},\cdots,\pi_{1}$, we introduce intermediate variables $b_{\pi_k}$, where $b_{\pi_k}=\left | \mathbf{u}_{\pi_k}^{\rm{H}}\mathbf{H}_{\pi_k}\mathbf{w} \right |^{2}$ for $1\leq k\leq K$. Thus, the partial derivative of the achievable rate is

\begin{align}
&\notag \frac{\partial~R_{\pi_{s}}}{\partial~b_{\pi_{k}}}\mid \{\pi_{s}=\pi_{k}\}=\frac{\partial~\log_{2}(1+ \frac{b_{\pi_{k}}p_{\pi_{k}}}{b_{\pi_{k}}\sum \limits_{m=1}^{k-1}p_{\pi_{m}}+\sigma^{2}})}{\partial~b_{\pi_{k}}}\\
&~~~~~=\frac{1}{\ln2}\frac{p_{\pi_{k}}\sigma^2}{(b_{\pi_{k}}\sum \limits_{m=1}^{k-1}p_{\pi_{m}}+\sigma^{2})(b_{\pi_{k}}\sum \limits_{m=1}^{k}p_{\pi_{m}}+\sigma^{2})}\geq 0,\\
&\frac{\partial~R_{\pi_{s}}}{\partial~b_{\pi_{k}}}\mid \{\pi_{s}\neq \pi_{k}\}=0.
\end{align}

The achievable rate of User $\pi_{k}$ is increasing for $b_{\pi_{k}}$, while the achievable rates of the other users are independent to $b_{\pi_{k}}$. Thus, to maximize the ASR, we can always adjust the Rx-BFV for each user to maximize $b_{\pi_{k}}~(1\leq k\leq K)$. For User $\pi_{k}$, as $\mathbf{H}_{\pi_{k}}\mathbf{w}$ is a constant vector, we just need to let the phase of each element of $[\mathbf{u}_{\pi_{k}}]$ be the same with the phase of the corresponding element of $[\mathbf{H}_{\pi_{k}}\mathbf{w}]$, which is not influenced by the decoding order. Thus, under any decoding orders, the optimal solution of the Rx-BFVs is always given by \eqref{opt_RxBFV}.
\end{proof}

Based on Theorem 1 and Theorem 2, we can further obtain the ASR of the $K$ users, which is given by
\begin{equation}\label{SumRate}
\begin{aligned}
&R_{\rm{sum}} \triangleq R(\mathbf{w})\\
=&\sum\limits_{k=2}^{K}r_{k}+\log_{2}(1+ \frac{\left | \mathbf{u}_{1}^{\star\rm{H}}\mathbf{H}_{1}\mathbf{w} \right |^{2}p_{1}^{\star}}{\sigma^{2}}),
\end{aligned}
\end{equation}
where $p_{1}^{\star}$ and $\mathbf{u}_{1}^{\star}$ are both functions of $\mathbf{w}$, whose definitions are given by \eqref{opt_power} and \eqref{opt_RxBFV}, respectively. From \eqref{SumRate}, we can find the value of $R_{\rm{sum}}$ is only determined by the Tx-BFV. Next, we will give the approach of Tx beamforming design in the third stage.

\subsection{Design of Tx-BFV with BC-PSO}
According to Theorem 1 and Theorem 2, Problem \eqref{eq_problem} can be transformed into a Tx beamforming problem, i.e.,
\begin{equation}\label{eq_BFproblem}
\begin{aligned}
\mathop{\mathrm{Maximize}}\limits_{\bf{w}}~~~~ &R(\mathbf{w})\\
\mathrm{Subject~ to}~~~ &|[\mathbf{w}]_{n}| = \frac{1}{\sqrt{N}},~~1\leq n \leq N
\end{aligned}
\end{equation}
where $R(\mathbf{w})$ is the achievable sum rate of $K$ users shown in \eqref{SumRate}. Although the explicit expression of $R(\mathbf{w})$ can be obtained according to \eqref{opt_power}, \eqref{opt_RxBFV} and \eqref{SumRate}, the highly non-convex formulation makes it complicated to solve Problem \eqref{eq_BFproblem} directly. In addition, the dimension of the Tx-BFV, i.e., $N$, is large in general, so it is computationally prohibitive to direct search the optimal solution.

To solve this difficult problem, particle swarm optimization (PSO) is an alternative approach \cite{Rahmat-Samii2003PSO,Robinson2004PSO,fukuyama2008fundamentals}. First, we give the basics of PSO.
\subsubsection{Basics of PSO }
In the $N$-dimension search space $\mathcal{S}$, the $I$ particles in the swarm are randomly initialized with position $\mathbf{x}$ and velocity $\mathbf{v}$. Each particle has a memory for its best found position $\mathbf{p}_{\text{best}}$ and the globally best position $\mathbf{g}_{\text{best}}$, where the goodness of a position is evaluated by the fitness function. For each iteration, the velocity and position of each particle are updated based on
\begin{equation}\label{eq_PSOregular}
\begin{aligned}
&[\mathbf{v}]_{n}=\omega[\mathbf{v}]_{n}+c_{1}\text{rand()}*([\mathbf{p}_{\text{best}}]_{n}-[\mathbf{x}]_{n})\\
&~~~~~~~~~~~~~~~~+c_{2}\text{rand()}*([\mathbf{g}_{\text{best}}]_{n}-[\mathbf{x}]_{n})\\
&[\mathbf{x}]_{n}=[\mathbf{x}]_{n}+[\mathbf{v}]_{n}
\end{aligned}
\end{equation}
for $n=1,2,\cdots,N$. The parameter $\omega$ is the inertia weight of velocity. In general, $\omega$ is linearly decreasing to improve the convergence speed. The parameters $c_{1}$ and $c_{2}$ are the cognitive ratio and social ratio, respectively. The random number function rand() returns a number between 0.0 and 1.0 with uniform distribution. After calculating the fitness function for each particle, the locally and globally best positions, i.e., $\mathbf{p}_{\text{best}}$ and $\mathbf{g}_{\text{best}}$, are updated. In such a manner, the particles diffuse around the search space and may find the globally optimal solution.

However, the CM constraint in Problem \eqref{eq_BFproblem} makes the search space highly non-convex. The particles may converge to a locally optimal solution with a high probability. Thus, directly using PSO in Problem \eqref{eq_BFproblem} may not obtain a promising performance. To this end, we propose a modified approach, i.e., BC-PSO. In the proposed approach, the feasible region is relaxed to a convex set, i.e., $|[\mathbf{w}]_{n}| \leq \frac{1}{\sqrt{N}}$. The boundary-compressed approach is proposed to guarantee that the particles satisfy the CM constraint. The details of the BC-PSO algorithm is shown bellow.

\subsubsection{Implementation of BC-PSO}
Define the search space of Problem \eqref{eq_BFproblem} as $\mathcal{S}=\{\mathbf{w}\big{|}|[\mathbf{w}]_{n}| \leq \frac{1}{\sqrt{N}}, 1\leq n \leq N\}$, which has two boundaries. The outer boundary is defined as $\{|[\mathbf{w}]_{n}| = \frac{1}{\sqrt{N}}, 1\leq n \leq N\}$, while the inner boundary is $\{|[\mathbf{w}]_{n}| = d_{t}, 1\leq n \leq N\}$. $d_{t}$ is a dynamic parameter, which is linear to the number of iterations. The initial value of $d_{t}$ is 0, and it increases linearly for each iteration until $d_{t}=\frac{1}{\sqrt{N}}$. For each iteration, if the particle moves across the outer/inner boundary, then it is adjusted onto the boundary. With this implementation, the particles can move throughout the relaxed search space and converges to the outer boundary eventually.

On the other hand, the definitions of the fitness function for different particles are different. The reason is that the order of effective channel gains may change when the particles move, which results in the change of the ASR's expression . Thus, when implementing the BC-PSO algorithm here, we should reorder the effective channel gains first in each iteration, and then obtain the fitness function, i.e., $R(\mathbf{w})$, according to \eqref{SumRate}.

In summary, we give Algorithm \ref{alg1} to solve Problem \eqref{eq_problem}.

\begin{algorithm}[h]
\caption{Implementation of BC-PSO}
\label{alg1}
\begin{algorithmic}[1]
\REQUIRE ~\\
Number of antennas: $M$ and $N$;\\
Number of particle swarm: $I$;\\
Maximum number of iterations: $T$;\\
Scaling factors: $c_1$ and $c_2$;\\
Range of inertia weight: $\omega_{\text{max}}$ and $\omega_{\text{min}}$.
\ENSURE ~$p_{k}^{\star}$, $\mathbf{u}_{k}^{\star}$ and $\mathbf{w}^{\star}$\\
\STATE Initialize the position $\mathbf{x}_{i}=\mathbf{w}_{i}$ and velocity $\mathbf{v}_{i}$.
\STATE Find the globally best solution position $\mathbf{g}_{\text{best}}$.
\FOR {$t=1:T$ }
\STATE $\omega=\omega_{\text{max}}-\frac{t}{T}(\omega_{\text{max}}-\omega_{\text{min}})$.
\STATE $d_{t}=\frac{t}{T\sqrt{N}}$.
\FOR {$i=1:I$ }
\FOR {$n=1:N$ }
\STATE Update $[\mathbf{v}_{i}]_{n}$ and $[\mathbf{x}_{i}]_{n}$ based on \eqref{eq_PSOregular}.
\IF{$|[\mathbf{x}_{i}]_{n}|<d_{t}$}
\STATE $[\mathbf{x}_{i}]_{n}=\frac{d_{t}[\mathbf{x}_{i}]_{n}}{|[\mathbf{x}_{i}]_{n}|}$.
\ENDIF
\IF {$|[\mathbf{x}_{i}]_{n}|>\frac{1}{\sqrt{N}}$}
\STATE $[\mathbf{x}_{i}]_{n}=\frac{[\mathbf{x}_{i}]_{n}}{\sqrt{N}|[\mathbf{x}_{i}]_{n}|}$.
\ENDIF
\IF{$|[\mathbf{p}_{\text{best},i}]_{n}|<d_{t}$}
\STATE $[\mathbf{p}_{\text{best},i}]_{n}=\frac{d_{t}[\mathbf{p}_{\text{best},i}]_{n}}{|[\mathbf{p}_{\text{best},i}]_{n}|}$.
\ENDIF
\STATE Obtain the optimal Rx-BFVs $\mathbf{u}_{k}^{\star}$ according to \eqref{opt_RxBFV}.
\STATE Reorder the effective channel gains of the users.
\STATE Obtain the optimal power allocation $p_{k}^{\star}$ according to \eqref{opt_power}.
\STATE Obtain the fitness function $R(\mathbf{w})$ according to \eqref{SumRate}.
\ENDFOR
\STATE Update $\mathbf{p}_{\text{best},i}$.
\ENDFOR
\STATE Update $\mathbf{g}_{\text{best}}$.
\ENDFOR
\STATE $\mathbf{w}^{\star}=\mathbf{g}_{\text{best}}$.
\RETURN $p_{k}^{\star}$, $\mathbf{u}_{k}^{\star}$ and $\mathbf{w}^{\star}$.
\end{algorithmic}
\end{algorithm}

Hereto, we solve the original problem. In the proposed solution, the power allocation and Rx beamforming are optimal, while the Tx beamforming is sub-optimal.


\subsection{Computational Complexity}
As we obtained the closed-form optimal power allocation and Rx-BFVs with an arbitrary fixed Tx-BFV, the computational complexity is mainly caused by Tx beamforming design in the third stage. In Algorithm 1, the total computational complexity is $\mathcal{O}(N)$, which linearly increases as $N$ and does not increase as $M$ or $K$. In contrast, if the direct search method is adopted, and the number of the candidate values for each variable in Problem \eqref{eq_problem} is $G$, the complexity of directly searching the globally optimal solution is $\mathcal{O}(G^{N+MK+K})$, which exponentially increases as $N$, $M$ and $K$.

\section{Performance Simulations}
In this section, we provide the simulation results to verify the performance of the proposed joint Tx-Rx beamforming and power allocation approach in the mmWave-NOMA system. We adopt the channel model shown in \eqref{eq_oriChannel}, where the users are uniformly distributed from 10m to 500m away from the BS, and the channel gain of the node 100m away from the BS has an average power of 0 dB to noise power. The number of MPCs for all the users is $L=4$. Both LOS and NLOS channel models are considered. For the LOS channel, the average power of the NLOS paths is 15 dB weaker than that of the LOS path. For the NLOS channel, the coefficient of each path has an average power of $1/\sqrt{L}$. For each channel realization in the simulations, the channel gains of the users are sorted by $\|\mathbf{h}_{1}\|\geq \|\mathbf{h}_{2}\|\geq \cdots \geq \|\mathbf{h}_{K}\|$. The corresponding parameter settings in Algorithm 1 are $I=800,T=50,c_{1}=c_{2}=1.4$.

\begin{figure}[t]
\begin{center}
  \includegraphics[width=\figwidth cm]{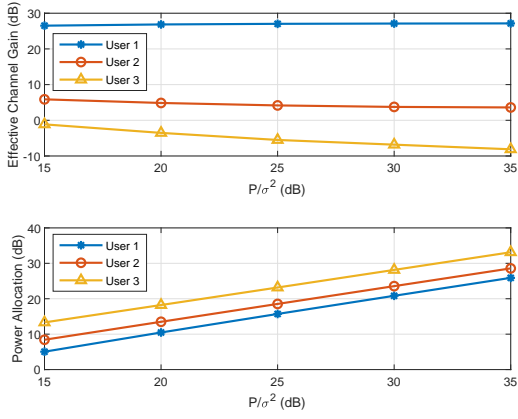}
  \caption{The values of effective channel gains and power allocation for different users with varying total power to noise ratio, where $M=4$, $N=16$, $K=3$ and $r_{k}=1.5$ bps/Hz.}
  \label{fig:Parameter_P}
\end{center}
\end{figure}
\begin{figure}[t]
\begin{center}
  \includegraphics[width=\figwidth cm]{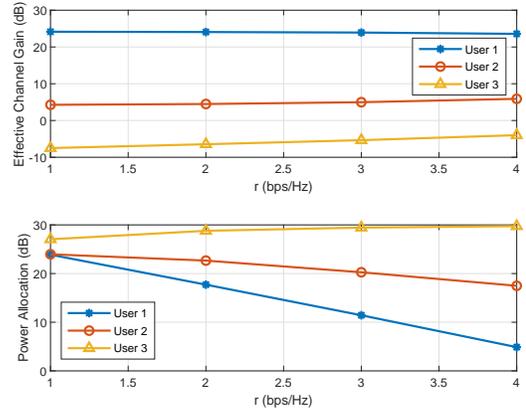}
  \caption{The values of effective channel gains and power allocation for different users with varying minimum rate constraint, where $M=4$, $N=16$, $K=3$, $r_{k}=r$ and $P/\sigma^2=30$ dB.}
  \label{fig:Parameter_r}
\end{center}
\end{figure}
We first show the power allocation and the effective channel gains in Figs. \ref{fig:Parameter_P} and \ref{fig:Parameter_r}. Each point in Figs. \ref{fig:Parameter_P} and \ref{fig:Parameter_r} is an average result of $10^3$ LOS channel realizations. From the two figures, we can find that the effective channel gain of User 1, the user with the highest channel gain, is distinctly larger than that of the other users. The user with a better channel gain has a higher effective channel gain with the proposed solution. In Fig. \ref{fig:Parameter_P}, the effective channel gains of User 1 and User 3 go increasing and decreasing, respectively, when $P/\sigma^2$ becomes higher. It indicates that when the total transmission power is high, power and beam gains should be allocated jointly to enlarge the difference of the effective channel gains to obtain a higher ASR. In contrast, the power allocation and the effective channel gain of User 1 go decreasing while the power allocation and the effective channel gain of User 3 go increasing, when $r$ becomes higher in Fig. \ref{fig:Parameter_r}. It indicates that more power and beam gain should be allocated to the users with worse channel gains to satisfy the constraint, when the minimum rate constraint is high.

\begin{figure}[t]
\begin{center}
  \includegraphics[width=\figwidth cm]{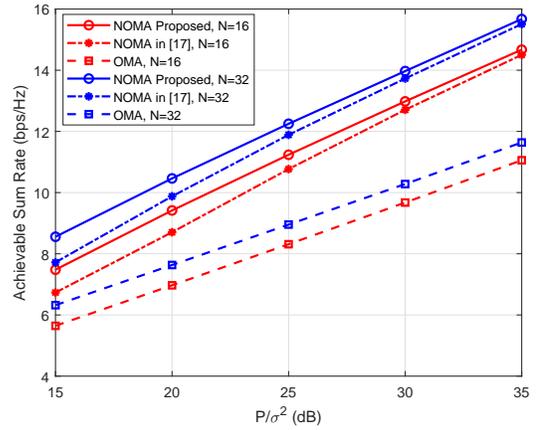}
  \caption{Comparison of the ASRs between the NOMA and OMA systems with varying total power to noise ratio, where $M=1$, $N=16,32$, $K=2$ and $r_{k}=1.5$ bps/Hz.}
  \label{fig:RateComp_P}
\end{center}
\end{figure}

Next, we compare the performance between the considered mmWave-NOMA system and a mmWave-OMA system. The achievable rate of User $k$ in a mmWave-OMA system is
\begin{equation}\label{OMA_Rate}
R_{k}^{\mathrm{(OMA)}}=\frac{1}{K}\log_{2}(1+ \frac{\left | \mathbf{u}_{k}^{\star\rm{H}}\mathbf{H}_{k}\mathbf{w}^{\star} \right |^{2}P}{\sigma^{2}}),
\end{equation}
where the factor $\frac{1}{K}$ is due to the multiplexing loss in OMA. $\mathbf{u}_{k}^{\star}$ and $\mathbf{w}^{\star}$ are the Rx-BFV and Tx-BFV given in Algorithm 1, respectively.

Fig. \ref{fig:RateComp_P} compares the ASRs between the proposed mmWave-NOMA algorithm, the mmWave-NOMA approach in \cite{xiao2018mmWaveNOMA} and mmWave-OMA with varying total power to noise ratio. Each point in Fig. \ref{fig:RateComp_P} is the average performance of $10^3$ LOS channel realizations. Significantly, the performance of the proposed mmWave-NOMA system is distinctly better than that of the mmWave-OMA system, as well as better than that of the solution in \cite{xiao2018mmWaveNOMA}. Particularly when $P/\sigma^2$ is low, the superiority of the proposed algorithm is more conspicuous compared with the approach in \cite{xiao2018mmWaveNOMA}. The reason is that given a designed BFV, the solutions of power allocation in this paper and \cite{xiao2018mmWaveNOMA} are both optimal. Thus, the performance gap is mainly caused by the beamforming design. Significantly, the proposed algorithm can always find a better BFV than that of the approach in \cite{xiao2018mmWaveNOMA}. As shown in \eqref{eq_Rate}, the achievable rate is determined by the product of the effective channel gain and the transmission power. When the total transmission power becomes lower, the effective channel gain becomes the main portion to determine the ASR, so the superiority of the proposed mmWave-NOMA algorithm is relatively conspicuous.

\begin{figure}[t]
\begin{center}
  \includegraphics[width=\figwidth cm]{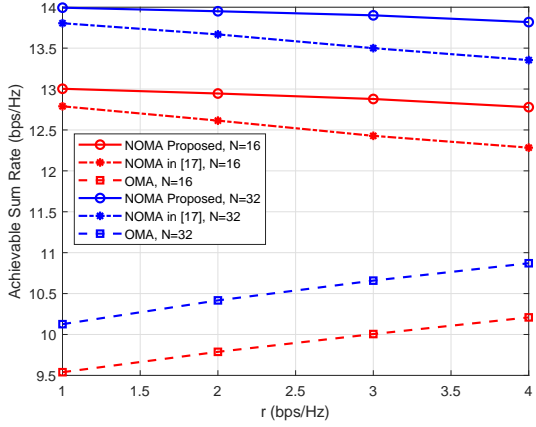}
  \caption{Comparison of the ASRs between the NOMA and OMA systems with varying minimum rate constraint, where $M=1$, $N=16,32$, $K=2$, $r_{k}=r$ and $P/\sigma^2=30$ dB.}
  \label{fig:RateComp_r}
\end{center}
\end{figure}
Fig. \ref{fig:RateComp_r} shows the comparison result of the ASRs between the proposed mmWave-NOMA algorithm, the mmWave-NOMA approach in \cite{xiao2018mmWaveNOMA} and mmWave-OMA with varying minimum rate constraint. Each point in Fig. \ref{fig:RateComp_r} is the average performance of $10^3$ LOS channel realizations. Similar to the result in Fig. \ref{fig:RateComp_P}, we can find that the proposed mmWave-NOMA algorithm can achieve a higher ASR than that of mmWave-NOMA in \cite{xiao2018mmWaveNOMA}, as well as higher than the ASR of the mmWave-OMA system. Particularly when $r$ increases, the superiority of the proposed algorithm is more conspicuous compared with the approach in \cite{xiao2018mmWaveNOMA}. The results indicate that the proposed beamforming design is better than that of the approach in \cite{xiao2018mmWaveNOMA}, especially when the minimum rate constraint is large.

\begin{figure}[t]
\begin{center}
  \includegraphics[width=\figwidth cm]{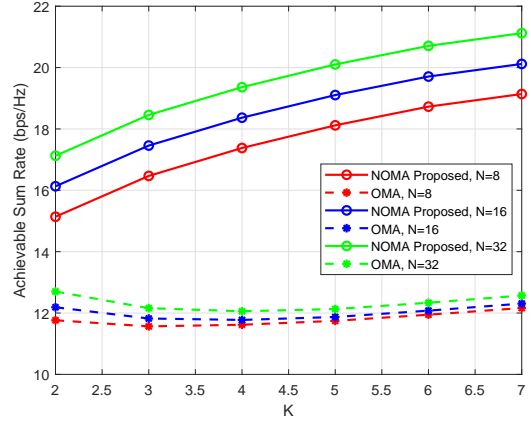}
  \caption{Comparison of the ASRs between the NOMA and OMA systems with varying number of the users, where $M=4$, $N=8,16,32$, $r_{k}=1.5$ bps/Hz and $\frac{P}{K\sigma^2}=30$ dB.}
  \label{fig:RateComp_K}
\end{center}
\end{figure}

Fig. \ref{fig:RateComp_K} compares the ASRs between mmWave-NOMA and mmWave-OMA systems with varying number of users. For fairness, each user has an average transmission power to noise of 30 dB. Each point in Fig. \ref{fig:RateComp_K} is the average performance of $10^3$ LOS channel realizations. It can be observed again that the mmWave-NOMA can outperform the mmWave-OMA, especially when $N$ is large. It can be seen that the ASR of the NOMA users improves as the number of users increases, while the ASR of the OMA users is always around a low value without obvious improvement. The results prove that mmWave-NOMA can achieve a higher spectrum efficiency compared with mmWave-OMA when the number of users increases.

\begin{figure}[t]
\begin{center}
  \includegraphics[width=\figwidth cm]{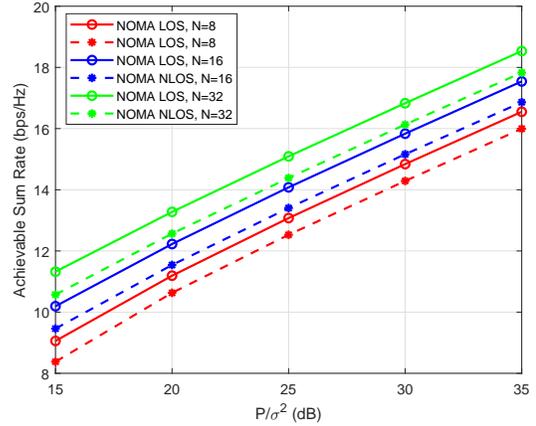}
  \caption{Comparison of the ASRs between the LOS and NLOS channel models with varying total power to noise ratio, where $M=4$, $N=8,16,32$, $K=3$ and $r_{k}=1.5$ bps/Hz.}
  \label{fig:Rate_LOS_NLOS_P}
\end{center}
\end{figure}

\begin{figure}[t]
\begin{center}
  \includegraphics[width=\figwidth cm]{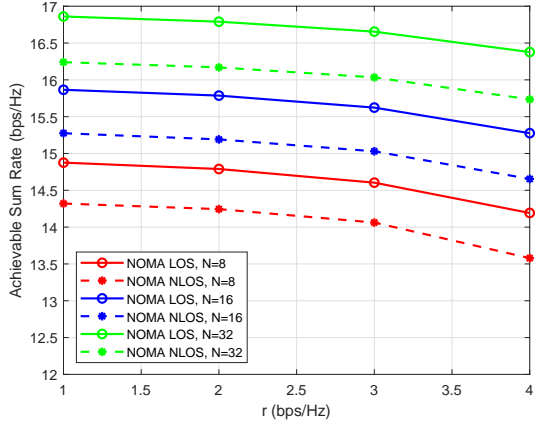}
  \caption{Comparison of the ASRs between the LOS and NLOS channel models with varying minimum rate constraint, where $M=4$, $N=8,16,32$, $K=3$, $r_{k}=r$ and $P/\sigma^2=30$ dB.}
  \label{fig:Rate_LOS_NLOS_r}
\end{center}
\end{figure}

Figs. \ref{fig:Rate_LOS_NLOS_P} and \ref{fig:Rate_LOS_NLOS_r} compare the ASRs of mmWave-NOMA system between the LOS and NLOS channel models with varying total power to noise ratio and with varying minimum rate constraint, respectively. Each point in this two figures is the average performance of $10^3$ channel realizations. It can be seen that the performance with the LOS channel model is distinctly better than that with the NLOS channel model, because the beam gain is more centralized for the LOS channel. Particularly, when$P/\sigma^2$ is small and $r$ is large, the performance gap between the LOS channel model and NLOS channel model is larger.

\begin{figure}[t]
\begin{center}
  \includegraphics[width=\figwidth cm]{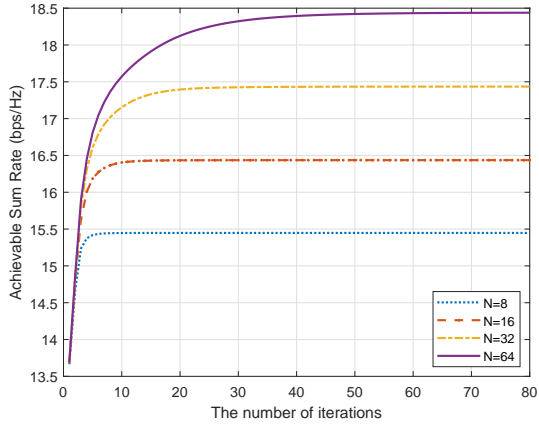}
  \caption{Iterations required for convergence in the BC-PSO algorithm, where $M=4$, $N=8,16,32,64$, $K=4$, $r_{k}=1$ bps/Hz and $P/\sigma^2=30$ dB.}
  \label{fig:Convergence_N}
\end{center}
\end{figure}

In the third stage of the solution, we proposed the BC-PSO algorithm and obtained a sub-optimal solution. The convergence of the proposed algorithm is evaluated in Fig. \ref{fig:Convergence_N}. When $N=8,16,32,64$, the curve of the ASR tends to be stable after 7,15,30,60 iterations, respectively. We can find that the number of iterations that the algorithm converges is roughly linear to $N$, which indicates that the proposed BC-PSO algorithm has a linear convergence rate against the number of antennas at the BS.

\begin{figure}[t]
\begin{center}
  \includegraphics[width=\figwidth cm]{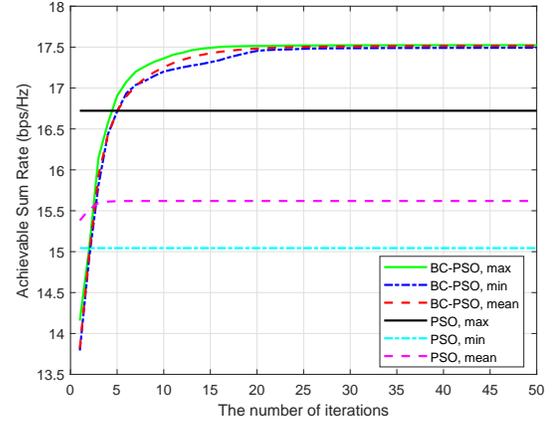}
  \caption{Comparison of the performance of PSO and BC-PSO, where $M=4$, $N=32$, $K=4$, $r_{k}=1$ bps/Hz and $P/\sigma^2=30$ dB.}
  \label{fig:Bound}
\end{center}
\end{figure}

To evaluate the stability of the proposed approach, we compare the performance of the proposed BC-PSO and the classical PSO in Fig. \ref{fig:Bound}, where PSO is corresponding to directly solving Problem \eqref{eq_BFproblem} in the search space of $\mathcal{S}=\{\mathbf{w}\big{|}|[\mathbf{w}]_{n}| = \frac{1}{\sqrt{N}}, 1\leq n \leq N\}$, while BC-PSO is corresponding to the proposed approach in Algorithm 1. With the same one channel realization, we solve Problem \eqref{eq_BFproblem} with the PSO algorithm and the BC-PSO algorithm for 1000 times with different initializations. It can be seen that the ASR with BC-PSO is distinctly higher than that with PSO. Moreover, the curve of the maximal ASR, minimal ASR and mean ASR for BC-PSO are close, which outperforms the performance of PSO. The results prove that the proposed BC-PSO algorithm can converge to a near-upper-bound performance with a high probability.

\section{Conclusion}
In this paper, we formulated a joint Tx-Rx beamforming and power allocation problem in a mmWave-NOMA system to maximize the ASR of the multiple users, subject to a minimum rate constraint for each user. To solve this non-convex problem, we proposed a sub-optimal solution with three stages. In the first stage, the optimal power allocation with a closed form was obtained for arbitrary fixed Tx-Rx beamforming. In the second stage, the optimal Rx beamforming with a closed form was designed for arbitrary fixed Tx beamforming. In the third stage, we proposed the BC-PSO algorithm to implement the reduced problem, i.e., a Tx beamforming problem, and obtained a sub-optimal solution. It was shown that the proposed algorithm has a preferable convergence and stability. The results showed that by using the proposed approach, the ASR of the mmWave-NOMA system could achieve a near-upper-bound performance with a high probability, which is significantly improved compared with the state-of-the-art approach and the conventional mmWave-OMA system.

\appendices
\section{Proof of Theorem1}
Denote the optimal power allocation of Problem \eqref{eq_problem3} is $\{p_{k}^{\star}\}$, and the achievable rate of User $k$ under optimal power allocation is $R_{k}^{\star}$. Assume that there is one user whose achievable rate is lager than its minimum rate constraint, i.e., $R_{k_{0}}^{\star} > r_{k_{0}}$, where $k_{0}$ is ranging from 2 to $K$. Consider the parameter settings bellow,
\begin{equation}\label{para_set2}
\left\{
\begin{aligned}
&p_{k}=p_{k}^{\star},~(k=1,2,\cdots,k_{0}-2,k_{0}+1,\cdots,K) \\
&p_{k_{0}-1}=p_{k_{0}-1}^{\star}+\delta, \\
&p_{k_{0}}=p_{k_{0}}^{\star}-\delta,
\end{aligned}
\right.
\end{equation}
where $\delta=\frac{S+\left | \mathbf{u}_{k_{0}}^{\rm{H}}\mathbf{H}_{k_{0}}\mathbf{w} \right |^{2}p_{k_{0}}^{\star}-\sqrt{2^{r_{k_{0}}}S(S+\left | \mathbf{u}_{k_{0}}^{\rm{H}}\mathbf{H}_{k_{0}}\mathbf{w} \right |^{2}p_{k_{0}}^{\star})}}{\left | \mathbf{u}_{k_{0}}^{\rm{H}}\mathbf{H}_{k_{0}}\mathbf{w} \right |^{2}}$ and $S=\left | \mathbf{u}_{k_{0}}^{\rm{H}}\mathbf{H}_{k_{0}}\mathbf{w} \right |^{2}\sum \limits_{m=1}^{k_{0}-1}p_{m}^{\star}+\sigma^{2}$. According to the assumption of $R_{k_{0}}^{\star} > r_{k_{0}}$, we have
\begin{equation}
\begin{aligned}
&1+ \frac{\left | \mathbf{u}_{k_{0}}^{\rm{H}}\mathbf{H}_{k_{0}}\mathbf{w} \right |^{2}p_{k_{0}}^{\star}}{\left | \mathbf{u}_{k_{0}}^{\rm{H}}\mathbf{H}_{k_{0}}\mathbf{w} \right |^{2}\sum \limits_{m=1}^{k_{0}-1}p_{m}^{\star}+\sigma^{2}}>2^{r_{k_{0}}}\\
\Leftrightarrow&S+\left | \mathbf{u}_{k_{0}}^{\rm{H}}\mathbf{H}_{k_{0}}\mathbf{w} \right |^{2}p_{k_{0}}^{\star} > 2^{r_{k_{0}}}S\\
\Leftrightarrow&(S+\left | \mathbf{u}_{k_{0}}^{\rm{H}}\mathbf{H}_{k_{0}}\mathbf{w} \right |^{2}p_{k_{0}}^{\star})^{2} > 2^{r_{k_{0}}}S(S+\left | \mathbf{u}_{k_{0}}^{\rm{H}}\mathbf{H}_{k_{0}}\mathbf{w} \right |^{2}p_{k_{0}}^{\star})\\
\Leftrightarrow&S+\left | \mathbf{u}_{k_{0}}^{\rm{H}}\mathbf{H}_{k_{0}}\mathbf{w} \right |^{2}p_{k_{0}}^{\star} > \sqrt{2^{r_{k_{0}}}S(S+\left | \mathbf{u}_{k_{0}}^{\rm{H}}\mathbf{H}_{k_{0}}\mathbf{w} \right |^{2}p_{k_{0}}^{\star})}\\
\Leftrightarrow&\delta>0\\
\end{aligned}
\end{equation}

Then, we calculate the achievable rates of the users. As we have $p_{k}=p_{k}^{\star}~(k=1,2,\cdots,k_{0}-2,k_{0}+1,\cdots,K)$, it is easy to verify that
\begin{equation}
R_{k}=R_{k}^{\star} \geq r_{k}.~(k=1,2,\cdots,k_{0}-2,k_{0}+1,\cdots,K)
\end{equation}
According to $\delta>0$, we have
\begin{equation}
\begin{aligned}
R_{k_{0}-1}&=\log_{2}(1+ \frac{\left | \mathbf{u}_{k_{0}-1}^{\rm{H}}\mathbf{H}_{k_{0}-1}\mathbf{w} \right |^{2}p_{k_{0}-1}}{\left | \mathbf{u}_{k_{0}-1}^{\rm{H}}\mathbf{H}_{k_{0}-1}\mathbf{w} \right |^{2}\sum \limits_{m=1}^{k_{0}-2}p_{m}+\sigma^{2}})\\
&=\log_{2}(1+ \frac{\left | \mathbf{u}_{k_{0}-1}^{\rm{H}}\mathbf{H}_{k_{0}-1}\mathbf{w} \right |^{2}(p_{k_{0}-1}^{\star}+\delta)}{\left | \mathbf{u}_{k_{0}-1}^{\rm{H}}\mathbf{H}_{k_{0}-1}\mathbf{w} \right |^{2}\sum \limits_{m=1}^{k_{0}-2}p_{m}^{\star}+\sigma^{2}})\\
&>\log_{2}(1+ \frac{\left | \mathbf{u}_{k_{0}-1}^{\rm{H}}\mathbf{H}_{k_{0}-1}\mathbf{w} \right |^{2}p_{k_{0}-1}^{\star}}{\left | \mathbf{u}_{k_{0}-1}^{\rm{H}}\mathbf{H}_{k_{0}-1}\mathbf{w} \right |^{2}\sum \limits_{m=1}^{k_{0}-2}p_{m}^{\star}+\sigma^{2}})\\
&=R_{k_{0}-1}^{\star} \geq r_{k_{0}-1},
\end{aligned}
\end{equation}
and according to the expression of $\delta$, we have
\begin{equation}
\begin{aligned}
R_{k_{0}}&=\log_{2}(1+ \frac{\left | \mathbf{u}_{k_{0}}^{\rm{H}}\mathbf{H}_{k_{0}}\mathbf{w} \right |^{2}p_{k_{0}}}{\left | \mathbf{u}_{k_{0}}^{\rm{H}}\mathbf{H}_{k_{0}}\mathbf{w} \right |^{2}\sum \limits_{m=1}^{k_{0}-1}p_{m}+\sigma^{2}})\\
&=\log_{2}(1+ \frac{\left | \mathbf{u}_{k_{0}}^{\rm{H}}\mathbf{H}_{k_{0}}\mathbf{w} \right |^{2}(p_{k_{0}}^{\star}-\delta)}{\left | \mathbf{u}_{k_{0}}^{\rm{H}}\mathbf{H}_{k_{0}}\mathbf{w} \right |^{2}\sum \limits_{m=1}^{k_{0}-1}p_{m}^{\star}+\sigma^{2}})\\
&=\log_{2}(\frac{\sqrt{2^{r_{k_{0}}}S(S+\left | \mathbf{u}_{k_{0}}^{\rm{H}}\mathbf{H}_{k_{0}}\mathbf{w} \right |^{2}p_{k_{0}}^{\star})}}{S})\\
&=\log_{2}\sqrt{2^{r_{k_{0}}}(1+ \frac{\left | \mathbf{u}_{k_{0}}^{\rm{H}}\mathbf{H}_{k_{0}}\mathbf{w} \right |^{2}p_{k_{0}}^{\star}}{\left | \mathbf{u}_{k_{0}}^{\rm{H}}\mathbf{H}_{k_{0}}\mathbf{w} \right |^{2}\sum \limits_{m=1}^{k_{0}-1}p_{m}^{\star}+\sigma^{2}})}\\
&=\frac{R_{k_{0}}^{\star}+r_{k_{0}}}{2}>r_{k_{0}}.
\end{aligned}
\end{equation}

Based on Lemma 2, when $p_{k}=p_{k}^{\star}~(k=1,2,\cdots,k_{0}-2,k_{0}+1,\cdots,K)$, $R_{\rm{sum}}$ is decreasing for $p_{k_{0}}$. Due to $p_{k_{0}}=p_{k_{0}}^{\star}-\delta<p_{k_{0}}^{\star}$, we have
\begin{equation}
R_{\rm{sum}}>R_{\rm{sum}}^{\star},
\end{equation}
which means that under the parameter settings of $\{p_{k}\}$, the minimum rate constraints for all the users are satisfied, and meanwhile the ASR increases. It is in contrast to the assumption that $\{p_{k}^{\star}\}$ is optimal. Thus, we have $R_{k}^{\star}=r_{k}~(2\leq k\leq K)$.

Finally, solve the following equation set
\begin{equation}
\left \{
\begin{aligned}
&R_{k}=r_{k},~(2\leq k\leq K)\\
&\sum\limits_{k=1}^{K}p_{k}=P.
\end{aligned}
\right.
\end{equation}
We can obtain that the optimal power allocation of Problem \eqref{eq_problem3} is given by \eqref{opt_power}.



\end{document}